\documentclass[sigplan,10pt]{acmart}

\usepackage{hyperref}
\usepackage{amsmath,amsthm}

\usepackage{amssymb}
\usepackage{algorithm}
\usepackage{algpseudocode}
\usepackage{booktabs}
\usepackage{subcaption}

\newtheorem{definition}{Definition}
\newtheorem{theorem}{Theorem}

\newtheorem{corollary}{Corollary}
\newtheorem{proposition}{Proposition}

\setcopyright{none}
\acmConference[Preprint]{Preprint}{2026}{}
\renewcommand\footnotetextcopyrightpermission[1]{}

\begin{document}

\title{Queue-Theoretic Admission Control for\\Multi-Tenant GPU Clusters}

\author{Sohan Kunkerkar}
\affiliation{%
  \institution{Red Hat}
  \country{United States}}
\email{skunkerk@redhat.com}

\begin{abstract}
GPU cluster operators cannot predict how long pending workloads will
wait for admission. Existing systems use greedy heuristics with no
formal wait time guarantees. We formalize GPU cluster admission as a
multi-class, multi-resource queueing network and prove a structural
decomposition: the pending queue partitions into \emph{quotable}
workloads (bounded wait time under stability) and \emph{unfeasible}
workloads (no finite bound without reconfiguration). For quotable
workloads, we model each cluster queue as an M/G/$k$ system where the
effective server count $k$ is determined by a vector packing
reduction; under an explicit stochastic domination assumption, we
establish $O(1/(1-\rho))$ wait time scaling. We prove that optimal
admission ordering is NP-hard under multi-dimensional resource
demands via reduction from vector bin packing. We validate on Kueue, the standard Kubernetes workload queuing
system, showing that the Erlang-C approximation consistently
overestimates observed wait times (a conservative direction) in the
moderate utilization regime.
\end{abstract}

\maketitle

\section{Introduction}
\label{sec:intro}

GPU cluster utilization is low: industry reports put average
utilization at 5--30\% across enterprise GPU
clusters~\cite{castai2026}. Operators cannot predict whether a
submitted workload will wait minutes or hours for admission.
Production batch workload queuing systems---Kueue~\cite{kueue},
Slurm~\cite{slurm}, and proprietary schedulers---use greedy admission
heuristics with no formal performance guarantees.

Recent work has applied queueing theory to GPU \emph{inference
serving}, where request scheduling within a single model deployment
benefits from classical results. Mitzenmacher and
Shahout~\cite{mitzenmacher2025} survey learning-augmented scheduling
with predictions, and Dai et al.~\cite{dai2025} prove
throughput-optimality for batched LLM serving. However, these results
address a fundamentally different problem: scheduling requests
\emph{within} a serving instance, not scheduling workloads
\emph{across} a multi-tenant cluster with quotas, resource flavors,
topology constraints, and preemption.

The cluster admission problem is harder for three reasons. First,
admission requires \emph{vector packing}: each workload demands a
multi-dimensional resource vector (GPU count, memory, network
bandwidth), and admission requires all dimensions to fit
simultaneously. Second, the pending queue contains a mixture of
workloads that \emph{can} eventually be admitted (if other workloads
complete) and workloads that \emph{cannot} be admitted under the current cluster
configuration. Third, preemption creates feedback loops
where evicted workloads re-enter the queue, invalidating the
independent-arrivals assumption of classical models.

We address these challenges with three contributions:

\begin{enumerate}
\item \textbf{Quotable/Unfeasible Partition Theorem.} We prove that
the pending queue decomposes into a \emph{quotable} partition (where
queueing theory applies and wait times are bounded) and an
\emph{unfeasible} partition (where no finite bound exists). This
decomposition is necessary for accurate wait time estimation: any
model that mixes both partitions produces unbounded prediction error
(\S\ref{sec:partition}).

\item \textbf{Wait Time Bounds.} For quotable workloads, we assume
stochastic domination by an M/G/$k$ reference system with effective
server count determined by vector packing
(Modeling Assumption~1), establishing $O(1/(1-\rho))$ wait time
scaling. We provide a practical modified Erlang-C approximation
accounting for multi-resource packing and preemption feedback
(\S\ref{sec:bounds}).

\item \textbf{NP-Hardness of Optimal Admission.} We prove that
optimal admission ordering---minimizing total weighted wait time---is
NP-hard under multi-resource vector packing, and show that MIG
packing alone (fixed profile sizes) is tractable
(Proposition~\ref{prop:mig-tractable}). This pinpoints
dimensionality, not GPU partitioning, as the source of computational
hardness (\S\ref{sec:hardness}).
\end{enumerate}

We validate on Kueue, the standard Kubernetes workload queuing
system (\S\ref{sec:eval}), using CPU/memory and GPU resources via
Dynamic Resource Allocation (DRA). Measurements confirm
internal consistency (Little's Law ratio $= 1.000$) and show that
the vector $k_{\text{eff}}$ correctly identifies the bottleneck
resource dimension (memory or GPU), while a scalar estimate misses
it entirely. The Erlang-C approximation overestimates observed wait
times by 2.3--18$\times$---conservative but too loose for point
prediction. A simple EMA baseline gives better point accuracy; the
Erlang-C approximation's value is as an \emph{a priori} bound
requiring no historical observations.

\section{Background and Motivation}
\label{sec:background}

\subsection{GPU Cluster Admission Architecture}

We describe the admission architecture of Kueue~\cite{kueue}, the
reference implementation for Kubernetes batch workload queuing,
managing admission across CPU, memory, GPU, and other resource types.
The architecture is representative of production batch schedulers
including Slurm and proprietary cloud schedulers.

\paragraph{Resource Model.}
A cluster is organized into a hierarchy of \emph{Cohorts},
\emph{ClusterQueues} (CQs), and \emph{LocalQueues} (LQs). Each CQ
has a \emph{nominal quota} specifying guaranteed capacity across
multiple \emph{ResourceFlavors} (e.g., NVIDIA A100, H100, or MIG
profiles like 1g.10gb). Each flavor-resource pair $(f, r)$ defines one
dimension of the capacity vector. CQs within a Cohort may
\emph{borrow} unused capacity from sibling CQs, subject to borrowing
and lending limits.

\paragraph{Admission Pipeline.}
The scheduler operates in discrete cycles. Each cycle:
(1)~dequeues one head workload per CQ,
(2)~takes an immutable snapshot of cluster state,
(3)~computes resource flavor assignments via greedy matching,
(4)~orders workloads by priority or fair-share usage,
(5)~admits workloads greedily in order, updating the snapshot, and
(6)~requeues unadmitted workloads.

\paragraph{Two Pending Pools.}
Unadmitted workloads are placed in one of two pools:
\begin{itemize}
\item The \emph{active heap}: workloads eligible for the next
scheduling attempt, ordered by priority and timestamp.
\item The \emph{inadmissible set}: workloads that failed admission and
will not be retried until cluster state changes (e.g., a workload
completes or a CQ is reconfigured).
\end{itemize}

\paragraph{Scheduling Disciplines.}
Two disciplines govern head-of-line behavior:
\emph{StrictFIFO} blocks all workloads behind an inadmissible head
(classic head-of-line blocking).
\emph{BestEffortFIFO} moves inadmissible heads to the inadmissible
set, allowing subsequent workloads to proceed.

\subsection{Why Existing Queueing Theory Doesn't Apply Directly}

Classical queueing results (M/M/$k$, M/G/1, etc.) assume: (1)~scalar
resource requirements, (2)~homogeneous servers, and (3)~independent
arrivals. GPU cluster admission violates all three:

\begin{enumerate}
\item \textbf{Vector resources.} Admission requires fitting a
$d$-dimensional request vector against a $d$-dimensional capacity
vector. This is vector bin packing, which has no APTAS for $d \geq
2$~\cite{chekuri2004}.

\item \textbf{Heterogeneous flavors.} Different ResourceFlavors have
different capacity vectors and node affinity constraints. A workload
may fit flavor $f_1$ but not $f_2$.

\item \textbf{Preemption feedback.} Preempted workloads re-enter the
queue with backoff counters and modified ordering metadata, creating
state-dependent arrivals.
\end{enumerate}

\subsection{The Wait Time Estimation Gap}

The Kubernetes Enhancement Proposal for workload visibility (KEP-168)
explicitly excluded wait time estimation: ``we don't have the means to
estimate wait time at the moment''~\cite{kep168}. Users see
``Pending'' with no indication of whether admission is minutes or
hours away. This gap has motivated multiple community feature
requests~\cite{issue13159, issue10124, issue10614}.

We show that the gap is not merely an engineering omission but a
consequence of a missing theoretical foundation: without the
quotable/unfeasible partition, any wait time estimate is meaningless
because it averages over workloads with finite and infinite expected
wait.

\section{Queueing Model}
\label{sec:model}

\subsection{System Definition}

\begin{definition}[GPU Cluster Admission System]
\label{def:system}
A \emph{GPU Cluster Admission System} is a tuple
$\mathcal{S} = (\mathcal{Q}, \mathcal{F}, \mathcal{R}, \mathcal{C}, \mathcal{W})$ where:
\begin{itemize}
\item $\mathcal{Q} = \{q_1, \ldots, q_m\}$ is a set of cluster queues.
\item $\mathcal{F} = \{f_1, \ldots, f_n\}$ is a set of resource flavors.
\item $\mathcal{R} = \{r_1, \ldots, r_d\}$ is a set of resource types
  (e.g., GPU count, GPU memory, CPU, system memory).
\item $\mathcal{C}: \mathcal{Q} \times \mathcal{F} \times \mathcal{R}
  \to \mathbb{R}_{\geq 0}$ is the capacity function, where
  $\mathcal{C}(q, f, r)$ is the nominal quota of resource $r$ in
  flavor $f$ for queue $q$.
\item $\mathcal{W}$ is the set of workload classes, where each class
  $j \in \mathcal{W}$ is characterized by:
  \begin{itemize}
  \item A resource demand function $\mathbf{d}_j: \mathcal{F} \times
    \mathcal{R} \to \mathbb{R}_{\geq 0}$, specifying the resources
    required per flavor.
  \item An arrival process with rate $\lambda_j$.
  \item A service time distribution $S_j$ with mean $\mathbb{E}[S_j]$
    and coefficient of variation $C_{S_j}$.
  \end{itemize}
\end{itemize}
\end{definition}

\begin{definition}[Admission]
A workload of class $j$ is \emph{admitted} to queue $q$ via flavor $f$
when:
\begin{equation}
\label{eq:admission}
\forall\, r \in \mathcal{R}: \quad
\mathbf{d}_j(f, r) \leq \text{Available}(q, f, r)
\end{equation}
where $\text{Available}(q, f, r) = \mathcal{C}(q, f, r) -
\text{Usage}(q, f, r) + \text{Borrowed}(q, f, r)$ is the currently
unused capacity including borrowed resources from the cohort.
\end{definition}

\begin{definition}[Potential Available Capacity]
\label{def:potential}
The \emph{potential available capacity} of queue $q$ for flavor $f$
and resource $r$ is:
\begin{equation}
\text{PotAvail}(q, f, r) = \mathcal{C}(q, f, r) + \text{MaxBorrow}(q, f, r)
\end{equation}
where $\text{MaxBorrow}(q, f, r)$ is the maximum borrowable capacity
from the cohort hierarchy, bounded by lending and borrowing limits.
This represents the capacity available if \emph{all} other workloads
in the cohort were to complete.
\end{definition}

\subsection{The Quotable/Unfeasible Partition}
\label{sec:partition}

\begin{definition}[Quotable and Unfeasible Workloads]
\label{def:quotable}
A pending workload of class $j$ targeting queue $q$ is:
\begin{itemize}
\item \emph{Quotable} if there exists a flavor $f \in \mathcal{F}$
  such that $\mathbf{d}_j(f, r) \leq \text{PotAvail}(q, f, r)$ for
  all $r \in \mathcal{R}$.
\item \emph{Unfeasible} otherwise.
\end{itemize}
\end{definition}

\begin{theorem}[Partition Necessity]
\label{thm:partition}
Let $\mathcal{P}$ be a pending queue where a fraction $\alpha > 0$ of
workloads are unfeasible (the remainder quotable with finite expected
wait $\mu_W < \infty$). For any wait time estimator
$\hat{W}: \mathcal{P} \to \mathbb{R}_{\geq 0}$ that does not
distinguish quotable from unfeasible workloads (i.e., $\hat{W}$
depends only on observable queue position and resource request, not on
feasibility classification), the expected mean absolute error over a
random workload $w$ drawn uniformly from $\mathcal{P}$ satisfies:
\[
\mathbb{E}\left[|\hat{W}(w) - W(w)|\right] \geq
\alpha \cdot (M - \mu_W)
\]
for any finite estimator cap $M = \max_w \hat{W}(w) < \infty$, and
equals $\infty$ for estimators without a finite cap. In either case,
the error grows without bound as the horizon increases.
\end{theorem}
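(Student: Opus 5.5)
The plan is to split the expected error according to the feasibility class of the sampled workload and to use only the unfeasible part for the lower bound. Write $\mathcal{P} = \mathcal{P}_Q \cup \mathcal{P}_U$ for the quotable and unfeasible partitions, with $|\mathcal{P}_U| = \alpha|\mathcal{P}|$. Because the absolute error is nonnegative, the law of total expectation gives
\[
\mathbb{E}\left[|\hat{W}(w) - W(w)|\right] \;\geq\; \alpha\,\mathbb{E}\left[|\hat{W}(w) - W(w)| \,\middle|\, w \in \mathcal{P}_U\right].
\]
The quotable term is simply dropped. This is also where the hypothesis that $\hat{W}$ ignores the feasibility classification enters. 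Such an estimator cannot assign unfeasible workloads systematically larger values than quotable ones with the same observables. The most it can give any workload is the cap $M$.

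The key step concerns unfeasible workloads. By Definition~\ref{def:quotable}, such a workload has $\mathbf{d}_j(f,r) > \text{PotAvail}(q,f,r)$ for some $r$ in every flavor $f$. Definition~\ref{def:potential} describes $\text{PotAvail}$ as the capacity left after every other workload in the cohort completes. So the admission condition (\ref{eq:admission}) can never hold while the configuration is unchanged, and $W(w) = \infty$. To get a statement that is finite and depends on a horizon, I would truncate the realized wait at an observation horizon $H$, taking $W_H(w) = \min(W(w), H)$. Then $W_H(w) = H$ for unfeasible $w$, and $H \geq M$ once $H$ is large. This yields $|\hat{W}(w) - W_H(w)| \geq H - M$ on $\mathcal{P}_U$. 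For the form stated in the theorem, the argument only needs $W(w) \geq 2M - \mu_W$ on unfeasible workloads, which any horizon $H \geq 2M$ supplies. Combining the two facts gives the bound $\alpha(M-\mu_W)$. The actual conditional error is at least $\alpha(H - M)$, which is larger, and I would present $\alpha(M-\mu_W)$ as a weaker corollary. The reference to $\mu_W$ I would justify by comparison with the best class-blind constant estimator. That estimator must keep error on the quotable mass, which has mean $\mu_W$, small, and this is what forces $M$ to be finite.

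For the uncapped case, suppose $\sup_w \hat{W}(w) = \infty$. On unfeasible workloads the true wait is infinite, so without truncation the error there is infinite and the expectation is $\infty$ because $\alpha > 0$. With truncation, the error grows like $\alpha(H - \hat{W})$, which tends to infinity as $H \to \infty$ on the unfeasible set wherever $\hat{W}$ is finite. The final sentence of the theorem, that the error grows without bound as the horizon increases, then follows directly from the bound $\alpha(H-M)\to\infty$.

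The main obstacle is making the stated constant $\alpha(M-\mu_W)$ come out cleanly. It is not the natural quantity: the natural lower bound is $\alpha(H-M)$, and this bound does not involve $\mu_W$. My first attempt would be to fix the horizon convention $H \geq 2M - \mu_W$ explicitly, so that the stated inequality follows. If the author intends a different convention, namely that unfeasible waits equal the horizon, I would instead prove the stronger $\alpha(H-M)$ bound and note that it implies the stated one for all sufficiently large $H$.
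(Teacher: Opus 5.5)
Your proposal follows the paper's own route: split off the unfeasible fraction $\alpha$, use $W(w)=\infty$ there (infinite error in the untimed model), and in a timeout/horizon-truncated model obtain the lower bound $\alpha(T-M)$, which diverges as $T\to\infty$. Your explicit condition $H\ge 2M-\mu_W$ is a step the paper never spells out, and it is what actually yields the stated constant $\alpha(M-\mu_W)$.

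One correction: the class-blind hypothesis does not enter at the step $\hat W\le M$, which is just the definition of $M$ and holds for any estimator. The paper uses that hypothesis only in an extra paragraph on estimators allowed to output $\infty$, arguing that any label-blind threshold rule must assign $\infty$ to some near-capacity quotable workloads.
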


\begin{proof}
Partition $\mathcal{P}$ into quotable set $\mathcal{P}_Q$ (fraction
$1 - \alpha$) and unfeasible set $\mathcal{P}_U$ (fraction $\alpha$).
For $w \in \mathcal{P}_U$, $W(w) = \infty$ by
Definition~\ref{def:quotable}: no amount of waiting yields admission
without reconfiguration. If $\hat{W}$ has a finite cap $M$, then for
unfeasible workloads $|\hat{W}(w) - W(w)| = \infty$, and the expected
error over a random workload is at least
$\alpha \cdot \infty = \infty$.

To make the bound constructive, consider a truncated model where
unfeasible workloads eventually leave the queue after a timeout $T$.
Then unfeasible workloads have true wait $W = T$, and the estimator
error on the unfeasible fraction is at least $\alpha \cdot |T - M|$.
As $T \to \infty$ (no timeout), the error diverges. Even with a
practical timeout, the error contribution from the unfeasible fraction
is $\alpha \cdot (T - M)$ when $T > M$, which dominates the
$O(\sigma_W)$ estimation error achievable on quotable workloads.

Alternatively, if $\hat{W}$ outputs $\infty$ for some workloads, it
must do so without the quotable/unfeasible label (by assumption). For
any threshold-based strategy outputting $\infty$ for workloads with
resource request above some cutoff, there exist quotable workloads
with large requests (near the capacity limit) that are misclassified,
incurring infinite error on those quotable workloads.

Some form of feasibility classification is therefore necessary for
bounded-error wait time estimation; the quotable/unfeasible partition
based on potential available capacity
(Definition~\ref{def:quotable}) is a natural such classification.
\end{proof}

\subsection{Effective Server Count}
\label{sec:keff}

The key modeling challenge is determining $k_{\text{eff}}$: how many
workloads can a CQ serve simultaneously given its multi-dimensional
capacity?

\begin{definition}[Effective Server Count]
\label{def:keff}
For a queue $q$ serving a single workload class $j$ with demand vector
$\mathbf{d}_j$, the effective server count is:
\begin{equation}
\label{eq:keff-single}
k_{\text{eff}}(q, j) = \max_{f \in \mathcal{F}_j} \min_{r \in \mathcal{R}}
\left\lfloor \frac{\mathcal{C}(q, f, r)}{\mathbf{d}_j(f, r)} \right\rfloor
\end{equation}
where $\mathcal{F}_j \subseteq \mathcal{F}$ is the set of flavors
compatible with class $j$ (satisfying node affinity and taint constraints).
\end{definition}

For multiple workload classes, $k_{\text{eff}}$ depends on the class
mix. We use the \emph{dominant resource} approximation:
\begin{equation}
\label{eq:keff-multi}
k_{\text{eff}}(q) = \min_{r \in \mathcal{R}}
\left\lfloor \frac{\mathcal{C}(q, f^*, r)}
{\sum_j \phi_j \cdot \mathbf{d}_j(f^*, r)} \right\rfloor
\end{equation}
where $\phi_j = \lambda_j / \sum_{j'} \lambda_{j'}$ is the class proportion
and $f^* = \arg\min_f k_{\text{eff}}(q, f)$ is the bottleneck flavor.

\begin{proposition}[Vector Packing Loss]
\label{prop:packing}
For $d \geq 2$ resource dimensions, the effective server count
satisfies:
\begin{equation}
k_{\text{eff}} \leq \frac{1}{d} \sum_{r=1}^{d}
\left\lfloor \frac{C_r}{\bar{d}_r} \right\rfloor
\end{equation}
where $C_r$ is the capacity in dimension $r$ and $\bar{d}_r$ is the
mean demand. Equality holds only when all demand vectors are aligned
(proportional across dimensions).
\end{proposition}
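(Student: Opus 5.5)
The plan is to read $k_{\text{eff}}$ through the dominant-resource formula~\eqref{eq:keff-multi}: fix the bottleneck flavor $f^*$, write $C_r = \mathcal{C}(q,f^*,r)$ and $\bar{d}_r = \sum_j \phi_j \mathbf{d}_j(f^*,r)$, and set $a_r = \lfloor C_r/\bar{d}_r \rfloor$. Then $k_{\text{eff}} = \min_r a_r$. The inequality follows because a minimum never exceeds an average: $\min_r a_r \le \frac{1}{d}\sum_{r=1}^d a_r$, since each $a_r \ge \min_{r'} a_{r'}$ and summing the $d$ inequalities gives $d\cdot\min_r a_r \le \sum_r a_r$. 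For the single-class case (Definition~\ref{def:keff}) I would first note that the $\max_f$ is attained at some flavor. Restricting to that flavor reduces the claim to the same min-versus-mean comparison, with $\bar{d}_r = \mathbf{d}_j(f,r)$.

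For the equality clause, I would observe that equality in the min-mean inequality holds iff all $a_r$ coincide. So the content is to characterize when $\lfloor C_r/\bar{d}_r\rfloor$ is independent of $r$. Write $d_r = \bar{d}_r$ and normalize capacities. If the demand vector is proportional to the capacity vector, i.e.\ $\bar{d}_r = c\, C_r$ for a common $c$, then every ratio $C_r/\bar{d}_r = 1/c$ and all floors agree, so equality holds. This is the ``aligned'' case, where alignment is read relative to the capacity vector.

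For the converse direction I would argue contrapositively. If the normalized demands $\bar{d}_r/C_r$ are not all equal, then some dimension has a strictly larger normalized demand. That dimension has a strictly smaller ratio, hence generically a strictly smaller floor, and this makes the min strictly below the average.

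The main obstacle is exactly this converse. Floors can coincide even when the ratios differ (e.g.\ $C_1/\bar{d}_1 = 3.2$ and $C_2/\bar{d}_2 = 3.7$), so ``only when aligned'' is false as literally stated. I would handle this in one of two ways:
\begin{enumerate}
\item Prove the statement for the unfloored quantities $C_r/\bar{d}_r$, where strict inequality follows immediately from non-equal ratios.
\item State the floored equality condition precisely as $\lfloor C_r/\bar{d}_r\rfloor$ being constant in $r$, with alignment (proportionality to capacity) as a sufficient condition that is also necessary up to the integer-rounding slack.
\end{enumerate}
I would also remark that ``aligned'' must mean aligned with the capacity vector, not merely proportional across dimensions among workloads.
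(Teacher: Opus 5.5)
Your argument is correct and is essentially the paper's: both identify $k_{\text{eff}}$ with $\min_r \lfloor C_r/\bar{d}_r \rfloor$ and conclude from the fact that a minimum never exceeds the arithmetic mean, with equality exactly when all per-dimension floors coincide. Your caveat on the equality clause is well founded. The paper's proof simply equates ``all dimensions equally constrained'' with ``proportional demands,'' which the floors break (your $3.2$ versus $3.7$ example), and ``aligned'' has to be read relative to the capacity vector, as you say.
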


\begin{proof}
By the vector packing constraint, each admitted workload must fit in
\emph{all} dimensions simultaneously. The per-dimension capacity
$\lfloor C_r / \bar{d}_r \rfloor$ is an upper bound assuming
dimension $r$ is the bottleneck. The minimum over dimensions gives
$k_{\text{eff}}$. The arithmetic mean of per-dimension capacities is
an upper bound on this minimum, with equality only when all dimensions
are equally constrained (proportional demands).
\end{proof}

\subsection{M/G/$k$ Domination and Wait Time Finiteness}

\begin{quote}
\textbf{Modeling Assumption 1} (M/G/$k$ Domination).
\label{lem:domination}
\emph{Let $W^{\text{real}}$ denote the wait time of a quotable
workload in the real system (with vector packing, preemption feedback,
and scheduling policy). Let $W^{\text{M/G/k}}$ denote the wait time
in an M/G/$k$ queue with $k = k_{\text{eff}}(q)$ servers
(Definition~\ref{def:keff}), arrival rate
$\lambda_{\text{eff}} = \lambda + \mu_p$ where
$\lambda = \sum_j \lambda_j$ is the total arrival rate (including
preemption feedback rate $\mu_p$), and service time distribution
$S$. We assume:}
\[
W^{\text{real}} \leq_{\text{st}} W^{\text{M/G/k}}
\]
\emph{where $\leq_{\text{st}}$ denotes stochastic ordering.}
\end{quote}

\paragraph{Justification.}
We argue that the real system is no worse than the M/G/$k$ reference
by examining each structural difference. A formal coupling
construction on a common probability space remains open; the following
is a plausibility argument that we validate empirically in
\S\ref{sec:eval}.

\emph{(i) Vector packing.} The real system admits a workload only
when all $d$ resource dimensions fit. $k_{\text{eff}}$ is defined as
the minimum over all dimensions
(Definition~\ref{def:keff}), so the M/G/$k$ reference has at most
$k_{\text{eff}}$ servers. In the real system, some workloads may fit
in dimensions where others do not, allowing slightly more concurrent
service than the worst-case $k_{\text{eff}}$. Thus the real system
has $k^{\text{real}}_{\text{eff}} \geq k_{\text{eff}}$, and M/G/$k$
with $k = k_{\text{eff}}$ dominates.

\emph{(ii) Preemption feedback.} We account for preemption by
inflating the arrival rate to
$\lambda_{\text{eff}} = \lambda + \mu_p$, treating preempted
workloads as new arrivals with i.i.d.\ service times drawn from the
original distribution $S$. For workloads with checkpointing, the
remaining service time after preemption is strictly \emph{shorter}
than a fresh draw from $S$, making the i.i.d.\ reference
conservative. For restart-from-scratch workloads, the re-entering
job has the \emph{same} service time as before, introducing positive
autocorrelation in the arrival stream. This autocorrelation can
increase queue lengths beyond the i.i.d.\ case, potentially
violating the domination direction. We restrict the domination
claim to systems with checkpoint-based preemption or low preemption
rates ($\mu_p \ll \lambda$); the restart-from-scratch case requires
analysis under correlated arrivals (e.g., GI/GI/$k$) which we
leave to future work.

\emph{(iii) Scheduling policy.} The M/G/$k$ reference uses FCFS.
Under BestEffortFIFO, blocked workloads are moved aside, avoiding
head-of-line blocking---this can only reduce wait times relative to
FCFS (where a blocked head delays all subsequent workloads). Under
StrictFIFO, head-of-line blocking increases wait times beyond FCFS,
meaning the M/G/$k$/FCFS reference \emph{underestimates} wait times
for StrictFIFO queues. We restrict the domination claim to
BestEffortFIFO; for StrictFIFO, the M/G/$k$ reference is optimistic
rather than conservative.

\paragraph{Sufficient conditions.}
Modeling Assumption~1 is supported when three conditions hold:
(1)~the scheduling discipline is BestEffortFIFO (avoiding
head-of-line blocking), (2)~preemption is checkpoint-based or
infrequent ($\mu_p \ll \lambda$), and (3)~arrivals are approximately
Poisson. The assumption may fail under StrictFIFO (where
head-of-line blocking increases wait times beyond FCFS) or high-rate
restart-from-scratch preemption (where autocorrelated re-entries can
exceed i.i.d.\ wait times). We treat this as a modeling assumption
and validate it empirically rather than claiming a formal proof.

\begin{theorem}[Quotable Wait Time Finiteness]
\label{thm:finite}
For a quotable workload of class $j$ in queue $q$, under the stability
condition $\rho_q < 1$ where
\[
\rho_q = \frac{\lambda_{\text{eff}} \cdot \mathbb{E}[S]}{k_{\text{eff}}(q)}
\]
with $\lambda_{\text{eff}} = \lambda + \mu_p$ (preemption-adjusted
arrival rate), the expected wait time $\mathbb{E}[W_q(j)]$ is finite.
\end{theorem}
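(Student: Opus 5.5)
The plan is to reduce the statement to a classical finiteness result for the M/G/$k$ queue via Modeling Assumption~1, and then transfer finiteness back to the real system through stochastic ordering. First, the workload is quotable, so by Definition~\ref{def:quotable} there is a flavor $f$ with $\mathbf{d}_j(f,r)\le \text{PotAvail}(q,f,r)$ for all $r$. Hence each floor ratio in Definition~\ref{def:keff} is at least one, giving $k_{\text{eff}}(q)\ge 1$. This guarantees that the reference system has at least one server and that $\rho_q$ is well defined. It is also the only place quotability enters. For an unfeasible workload no server could ever serve it, which is exactly the $W=\infty$ case of Theorem~\ref{thm:partition}.

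Second, consider the reference M/G/$k$ queue with $k=k_{\text{eff}}(q)$, Poisson arrivals at rate $\lambda_{\text{eff}}=\lambda+\mu_p$, and i.i.d.\ service times $S$. The hypothesis $\rho_q<1$ is precisely $\lambda_{\text{eff}}\mathbb{E}[S]<k$, the standard stability condition. Under it I would invoke the classical result (Kiefer--Wolfowitz for GI/GI/$k$) that the FCFS workload process is positive recurrent, so a stationary waiting time $W^{\text{M/G/k}}$ exists. For finiteness of its mean, I would use the Kingman-type bound for multiserver queues (Köllerström, or Scheller-Wolf's moment conditions). This gives $\mathbb{E}[W^{\text{M/G/k}}]<\infty$ whenever $\mathbb{E}[S^2]<\infty$, with the bound scaling as $O(1/(1-\rho_q))$; for instance, compare with a single fast server of rate $k$ and apply Pollaczek--Khinchine. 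The requirement $\mathbb{E}[S^2]<\infty$ is implicit in Definition~\ref{def:system}, since the class has a finite coefficient of variation $C_{S_j}$. I would state it explicitly.

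Third, transfer the bound. By Modeling Assumption~1, $W^{\text{real}}\le_{\text{st}}W^{\text{M/G/k}}$. Stochastic ordering means $P(W^{\text{real}}>t)\le P(W^{\text{M/G/k}}>t)$ for all $t$. Integrating the tail over $t\in[0,\infty)$ yields $\mathbb{E}[W_q(j)]=\mathbb{E}[W^{\text{real}}]\le\mathbb{E}[W^{\text{M/G/k}}]<\infty$. To be precise, the conclusion holds under the sufficient conditions listed after Modeling Assumption~1: BestEffortFIFO scheduling, checkpointed or infrequent preemption, and approximately Poisson arrivals.

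The main obstacle is the second step, namely finiteness of the mean M/G/$k$ wait with only $\rho<1$ and a generic service distribution. Finite variance is genuinely needed here: with $\mathbb{E}[S^2]=\infty$ the mean wait can be infinite, at least when the load is high enough, as the Scheller-Wolf--Sigman conditions show. I would therefore first try to cite the Köllerström/Kingman heavy-traffic bound directly under the finite-$C_S$ hypothesis. If an elementary argument is preferred, I would fall back on the fast-single-server comparison, bounding the M/G/$k$ FCFS wait by the M/G/1 wait with service $S/k$ plus a residual-service term. Both terms are finite by Pollaczek--Khinchine when $\rho_q<1$ and $\mathbb{E}[S^2]<\infty$.
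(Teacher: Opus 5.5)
Your argument is essentially the paper's: Modeling Assumption~1, Kiefer--Wolfowitz stability of the M/G/$k$ reference under $\rho_q<1$, then transfer of the mean through $\le_{\text{st}}$. It is also more careful than the paper's proof. The paper claims that \emph{all} moments of $W^{\text{M/G/k}}$ are finite from $\rho_q<1$ alone, which is false in general: already for $k=1$, a finite mean requires $\mathbb{E}[S^2]<\infty$. Your explicit finite-second-moment hypothesis, read off from the finite $C_{S_j}$, is exactly what the mean needs.

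One correction to your first step: quotability does not yield $k_{\text{eff}}(q)\ge 1$. Definition~\ref{def:quotable} tests demand against $\text{PotAvail}=\mathcal{C}+\text{MaxBorrow}$, while Definition~\ref{def:keff} uses the nominal $\mathcal{C}$ only. In addition, the $k_{\text{eff}}(q)$ in the statement is the multi-class quantity evaluated at the bottleneck flavor $f^*$. So a workload that fits only via borrowing can have a zero floor ratio.

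The step is dispensable, because $\rho_q<1$ with $\lambda_{\text{eff}}\mathbb{E}[S]>0$ already forces $k_{\text{eff}}(q)\ge 1$. Quotability actually enters the proof through Modeling Assumption~1, which is posited only for quotable workloads.
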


\begin{proof}
Under Modeling Assumption~1, $W^{\text{real}} \leq_{\text{st}}
W^{\text{M/G/k}}$. Under $\rho_q < 1$, the M/G/$k$ reference queue
is positive recurrent~\cite{kiefer1955} and all moments of
$W^{\text{M/G/k}}$ are finite. Stochastic domination preserves
finiteness of expectations, so
$\mathbb{E}[W^{\text{real}}] \leq \mathbb{E}[W^{\text{M/G/k}}] < \infty$.
\end{proof}

\subsection{Wait Time Bounds}
\label{sec:bounds}

A key difficulty in multi-server queueing is that no practical
closed-form upper bound on $\mathbb{E}[W_q]$ is known for M/G/$k$
with $k > 1$. Gupta et al.~\cite{gupta2010} proved that any
approximation using only the first two moments of the service
distribution has a worst-case ratio growing as
$\rho^{-\sqrt{2(k+1)}}$ between the maximum and minimum achievable
$\mathbb{E}[W_q]$---two-moment formulas are fundamentally limited for
multi-server queues. Li and Goldberg~\cite{ligoldberg2024} proved the
first multi-server Kingman bound: under FCFS with finite
$(2+\epsilon)$-moment service times,
$\mathbb{E}[W_q] = O(1/(1-\rho))$, but with prefactors too large for
practical prediction.

We therefore present a \emph{proven structural bound} (Theorem~\ref{thm:waittime}) establishing the correct asymptotic scaling, and a \emph{practical approximation} (Equation~\ref{eq:waitapprox}) for actual wait time estimation, validated empirically in \S\ref{sec:eval}.

\begin{theorem}[Wait Time Scaling]
\label{thm:waittime}
For quotable workloads in queue $q$ under FCFS with effective server
count $k = k_{\text{eff}}(q)$, utilization $\rho < 1$, and service
times with finite $(2+\epsilon)$-th moment, the expected wait time
satisfies:
\begin{equation}
\label{eq:waitscaling}
\mathbb{E}[W_q] = O\!\left(\frac{1}{1 - \rho}\right)
\end{equation}
That is, $\mathbb{E}[W_q]$ is finite and grows at most inversely with
the capacity slack $1 - \rho$.
\end{theorem}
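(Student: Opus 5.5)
The plan is a reduction in two steps: first pass from the real system to the M/G/$k$ reference via Modeling Assumption~1, then bound the reference using the multi-server Kingman bound of Li and Goldberg~\cite{ligoldberg2024}. Concretely, Modeling Assumption~1 gives $W^{\text{real}} \leq_{\text{st}} W^{\text{M/G/k}}$ with $k = k_{\text{eff}}(q)$, arrival rate $\lambda_{\text{eff}}$, and service distribution $S$. Stochastic ordering of nonnegative random variables implies ordering of expectations, since $\mathbb{E}[X] = \int_0^\infty \Pr[X > t]\,dt$. Hence $\mathbb{E}[W_q^{\text{real}}] \leq \mathbb{E}[W_q^{\text{M/G/k}}]$, and it suffices to establish the $O(1/(1-\rho))$ bound for the reference FCFS M/G/$k$ queue. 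Finiteness under $\rho<1$ is already given by Theorem~\ref{thm:finite}, so only the scaling remains.

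For the reference queue I would invoke the result of~\cite{ligoldberg2024}. For an FCFS GI/GI/$k$ queue with interarrival and service times having finite $(2+\epsilon)$-th moments, it gives $\mathbb{E}[W_q] \leq K/(1-\rho)$. Here $K$ depends on $k$, $\epsilon$, and the normalized moments of the interarrival and service distributions, but not on $\rho$ once the service law is fixed and $\lambda_{\text{eff}}$ is scaled. Poisson interarrivals have all moments finite, so the hypothesis reduces to the stated $(2+\epsilon)$-moment condition on $S$. The $O(\cdot)$ is to be read as $\rho \uparrow 1$ with $k$, $\epsilon$, and the law of $S/\mathbb{E}[S]$ held fixed. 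Because $\rho$ is defined with $\lambda_{\text{eff}}$ and $k_{\text{eff}}$, this matches the reference queue's utilization exactly.

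The main obstacle is making sure the constant $K$ in~\cite{ligoldberg2024} is genuinely uniform in $\rho$ near $1$. Their bound is stated in heavy-traffic form, so one must check that the prefactor stays bounded as $\lambda_{\text{eff}} \to k/\mathbb{E}[S]$. A naive use of their theorem could hide $\rho$-dependence inside normalized interarrival moments, but for Poisson arrivals the squared coefficient of variation is identically $1$, so I expect no issue. If uniformity were doubtful, my fallback is a cruder comparison that only yields a weaker constant but the same $1/(1-\rho)$ order: couple the M/G/$k$ FCFS queue with $k$ parallel M/G/$1$ queues under random routing, each at load $\rho$, and apply Pollaczek--Khinchine to get $\mathbb{E}[W] = \lambda_{\text{eff}}\mathbb{E}[S^2]/(2k(1-\rho))$. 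This fallback requires a separate justification that central FCFS dominates random routing. That claim is plausible via a work-conservation argument, but it is not pathwise-trivial for waiting times, so I would keep~\cite{ligoldberg2024} as the primary route. Finally, I would note that the result inherits all the restrictions of Modeling Assumption~1 (BestEffortFIFO, checkpoint-based or infrequent preemption). So this is a conditional structural bound rather than an unconditional theorem about the real cluster.
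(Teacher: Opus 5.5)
Your proposal is correct and follows the paper's proof: Modeling Assumption~1 reduces the problem to the FCFS M/G/$k$ reference with $k = k_{\text{eff}}(q)$, the Li--Goldberg multi-server Kingman bound gives $\mathbb{E}[W_q] \leq C(\epsilon,k)/(1-\rho)$ for that reference, and stochastic domination carries the bound back to the real system. Your checks that the constant stays uniform as $\rho \uparrow 1$ and your random-routing fallback go beyond what the paper writes, but the argument does not need them.
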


\begin{proof}
Under Modeling Assumption~1, the quotable partition of queue $q$
is dominated by an M/G/$k$ queue with $k = k_{\text{eff}}(q)$ servers.
Apply the multi-server Kingman bound of Li and
Goldberg~\cite{ligoldberg2024}, which establishes
$\mathbb{E}[W_q] \leq C(\epsilon, k) / (1 - \rho)$ for a constant
$C(\epsilon, k)$ depending on the moment index and server count.
The scaling holds for the real system by stochastic domination.
\end{proof}

While Theorem~\ref{thm:waittime} establishes the correct scaling, the
constant $C(\epsilon, k)$ is too large for prediction. For practical
use, we employ the standard M/M/$k$ correction:

\newcounter{approxctr}
\stepcounter{approxctr}
\paragraph{Approximation \theapproxctr.}
\label{approx:waittime}
\emph{(Modified Erlang-C Wait Time Estimate.)}
For quotable workloads with effective server count $k$, utilization
$\rho$, and service time coefficient of variation $C_S$:
\begin{equation}
\label{eq:waitapprox}
\hat{W}_q = \underbrace{C(k, \rho)}_{\text{Erlang-C}} \cdot
\underbrace{\frac{\mathbb{E}[S]}{k(1 - \rho)}}_{\text{M/M/$k$ wait}} \cdot
\underbrace{\frac{C_S^2 + 1}{2}}_{\text{variability correction}}
\end{equation}
where $C(k, \rho) = P(\text{wait} > 0)$ is the Erlang-C probability
\begin{equation}
C(k, \rho) = \frac{(k\rho)^k / k!}
{(k\rho)^k / k! + (1 - \rho) \sum_{n=0}^{k-1} (k\rho)^n / n!}
\end{equation}

This is the Cosmetatos-Tijms approximation~\cite{tijms2003}, which
interpolates between the exact M/M/$k$ result ($C_S = 1$) and the
exact M/D/$k$ result ($C_S = 0$). The approximation is well-studied
for classical M/G/$k$ queues~\cite{tijms2003} but has no proven error
guarantee. In our setting---where BestEffortFIFO departs from FCFS
and $k_{\text{eff}}$ is a worst-case estimate---we observe
2.3--18$\times$ overestimation (\S\ref{sec:eval}).

\paragraph{Accounting for Preemption Feedback.}
Preempted workloads re-enter the queue with rate $\mu_p$. The
effective arrival rate becomes $\lambda_{\text{eff}} = \lambda +
\mu_p$. Substituting into the utilization formula:
\begin{equation}
\rho_{\text{eff}} = \frac{(\lambda + \mu_p) \cdot \mathbb{E}[S]}{k}
\end{equation}
The stability condition becomes $\rho_{\text{eff}} < 1$, which is
more restrictive than the no-preemption case.

\paragraph{Accounting for Cohort Borrowing.}
When queue $q$ can borrow from siblings in a cohort, the effective
capacity increases. Let $B(q, f, r)$ denote the expected available
borrowed capacity (a function of sibling utilization). The effective
server count with borrowing:
\begin{equation}
k_{\text{eff}}^{+}(q) = \min_{r} \left\lfloor
\frac{\mathcal{C}(q, f^*, r) + \mathbb{E}[B(q, f^*, r)]}
{\bar{d}_r} \right\rfloor
\end{equation}
This gives a tighter bound in under-utilized cohorts and degrades
gracefully as sibling utilization increases.

\paragraph{Priority Classes.}
For a system with $P$ priority classes where class $p$ preempts class
$p' < p$, we use the standard preemptive-priority M/G/$k$ result:
the wait time for class $p$ depends only on the load from classes
$\geq p$:
\begin{equation}
\rho_p = \frac{\sum_{p' \geq p} \lambda_{p'} \cdot \mathbb{E}[S_{p'}]}{k}
\end{equation}

\section{NP-Hardness of Optimal Admission}
\label{sec:hardness}

We now show that the admission problem is computationally hard. We
first observe that MIG packing with fixed profiles is tractable,
then show that adding multi-resource vectors with topology affinity
makes the problem NP-hard.

\begin{definition}[GPU Topology-Aware Admission with Partitioning (GTAP)]
\label{def:gtap}
Given:
\begin{itemize}
\item A set of GPU nodes $\mathcal{N}$ connected by a topology graph
$G = (\mathcal{N}, E)$ where edges represent NVLink connections.
\item A set of MIG partition profiles $\Pi$ for each GPU
(e.g., 1g.10gb, 2g.20gb, 3g.40gb, 7g.80gb for A100).
\item A set of workloads $\mathcal{W}$, each with a
multi-dimensional resource demand vector
$\mathbf{d}_w \in \mathbb{R}^d_{\geq 0}$ (GPU compute, GPU memory,
CPU, system memory) and optional topology affinity constraints
(e.g., all GPUs on the same NVLink domain).
\end{itemize}

Admission consumes capacity from a shared quota pool: each admitted
workload's demand vector is subtracted from the available capacity.

\emph{Decide:} Can all workloads in $\mathcal{W}$ be simultaneously
admitted?
\end{definition}

\begin{proposition}[Fixed-Profile MIG Packing is Tractable]
\label{prop:mig-tractable}
When resource demands are one-dimensional (GPU memory only), the MIG
profile set $|\Pi|$ is a fixed constant, and there are no topology
constraints, GTAP is solvable in polynomial time.
\end{proposition}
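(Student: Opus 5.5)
The plan is to reduce the restricted GTAP instance to bin packing with a constant number of distinct item sizes, which is polynomial-time solvable by configuration enumeration (the standard dynamic program, or the Goemans--Rothvoss result for a constant number of item types).

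\textbf{Step 1: identify items and sizes.} With no topology constraints, the graph $G$ plays no role. With one-dimensional demands, each workload $w$ is described by a scalar GPU-memory demand $d_w$. A MIG-enabled GPU can host a workload only inside a profile slice. Hence each workload is first rounded up to the smallest profile $\pi \in \Pi$ with capacity at least $d_w$; if no such profile exists, the instance is immediately infeasible. After rounding, the instance consists of $n_\pi$ items of type $\pi$ for each $\pi \in \Pi$, with $|\Pi|$ a constant.

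\textbf{Step 2: enumerate GPU configurations.} Each GPU (node) must be partitioned into a multiset of profiles that is valid under the MIG geometry, for example at most 7 compute slices on an A100. Because $|\Pi|$ is constant and each profile uses at least one slice, the set $\mathcal{K}$ of valid per-GPU configurations has constant size. It can be listed explicitly, either from the vendor's admissible placement table or by enumerating all multisets of total slice count at most the GPU's slice budget.

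\textbf{Step 3: decide feasibility.} The question becomes whether we can choose one configuration $\kappa_g \in \mathcal{K}$ for each GPU $g$ so that $\sum_g \kappa_g(\pi) \ge n_\pi$ for every $\pi$. Run a dynamic program over the GPUs, one at a time. Its state is the vector $(c_\pi)_{\pi\in\Pi}$ of how many slots of each type have been provided so far, capped at $n_\pi$. There are at most $\prod_\pi (n_\pi+1) \le (|\mathcal{W}|+1)^{|\Pi|}$ states, which is polynomial. Each transition tries the $|\mathcal{K}| = O(1)$ configurations. The total running time is $O(|\mathcal{N}|\cdot|\mathcal{W}|^{|\Pi|})$. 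The shared quota pool adds only a single scalar memory budget, which can be checked as $\sum_w \pi(w)\le$ quota after rounding, or folded into the DP state.

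\textbf{Main obstacle: the rounding in Step 1.} I must argue that assigning each workload to its smallest fitting profile is without loss of generality. This relies on an exchange argument: any feasible solution placing $w$ in a larger profile $\pi'$ can be modified to use $\pi \le \pi'$ instead, because the MIG profile lattice is nested, so a larger slice can be subdivided into smaller valid ones. If the real MIG placement rules break this nesting, I would instead leave the choice of profile free. In that case the items are the workloads grouped by the (constant) set of profiles they fit, and the DP state tracks counts per workload-group and profile pair. The state space is still polynomial because $|\Pi|$ is constant.
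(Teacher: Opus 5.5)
Your proposal is correct and follows essentially the same route as the paper's proof: a constant number of valid partitionings per GPU, at most $|\Pi|$ workload types once each workload is mapped to a profile, and a dynamic program over GPUs whose state records per-type counts, so the state space is polynomial in $|\mathcal{W}|$. The paper leaves three points implicit that you spell out: the exchange argument for rounding each workload to its smallest fitting profile, the fallback of tracking profile choices when nesting fails, and the handling of the shared quota. These additions fill gaps in the paper's sketch without changing the argument.
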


\begin{proof}
With $|\Pi|$ fixed profile sizes, each GPU has at most
$O(|\Pi|^{|\Pi|})$ distinct valid partitionings---a constant. The
number of workload types by profile is at most $|\Pi|$. A dynamic
programming formulation over (GPU index, counts of remaining
workloads per type) has state space polynomial in $|\mathcal{W}|$.
\end{proof}

\begin{theorem}[NP-Hardness of Multi-Resource GTAP]
\label{thm:nphard}
GTAP is NP-hard when workloads have multi-dimensional resource demands
($d \geq 2$), even without topology constraints.
\end{theorem}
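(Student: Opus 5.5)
The plan is a polynomial-time reduction from $d$-dimensional vector bin packing (VBP) to GTAP with $d \geq 2$. The reduced instance uses a trivial topology and a trivial MIG profile set, so all the hardness sits in the multi-dimensional demands. This contrasts with Proposition~\ref{prop:mig-tractable}.

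First I have to fix the reading of Definition~\ref{def:gtap}. If the "shared quota pool" were a single aggregate capacity vector, then "can all workloads be admitted simultaneously" would reduce to checking $\sum_w \mathbf{d}_w \leq \mathbf{C}$ coordinatewise, which is trivially polynomial. So I will state explicitly that admission must also place each workload on a node $n \in \mathcal{N}$ with per-node capacity vector $\mathbf{c}_n$, and that the demands placed on $n$ must fit in every dimension. This is the physical packing that the vector admission condition (\ref{eq:admission}) models. The quota pool is then the sum of node capacities, and it is made non-binding by setting it equal to $\sum_n \mathbf{c}_n$.

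The reduction takes a VBP instance with items $\mathbf{v}_1,\dots,\mathbf{v}_N \in [0,1]^d$ and bin count $B$, and builds the following GTAP instance:
\begin{itemize}
\item $B$ GPU nodes, each with capacity $\mathbf{c}_n = \mathbf{1}^d$ (after scaling to integers);
\item an edgeless topology graph $G$, and no affinity constraints;
\item a single MIG profile $\Pi = \{\text{full GPU}\}$, so partitioning adds neither flexibility nor restriction;
\item one workload per item, with $\mathbf{d}_{w_i} = \mathbf{v}_i$.
\end{itemize}
The construction has size linear in the input. For correctness in both directions: a packing of the items into $B$ bins is, verbatim, an assignment of workloads to nodes that respects every coordinate, so all workloads are admitted. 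Conversely, any admissible placement groups the workloads by node into $B$ bins with coordinatewise load at most $\mathbf{1}$, which is a valid packing. VBP is NP-hard for every fixed $d \geq 2$~\cite{chekuri2004}, so GTAP is NP-hard.

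To make the contrast with Proposition~\ref{prop:mig-tractable} sharp, I would strengthen the result to strong NP-hardness. For this I would reduce instead from 3-PARTITION embedded in two dimensions. Take items with demand vectors $(a_i, T - a_i)$, $m$ nodes of capacity $(T_{\text{tot}}, 2T_{\text{tot}})$, where $T_{\text{tot}}$ is the 3-PARTITION target sum. The second coordinate can enforce cardinality exactly three per node if it is chosen appropriately; failing that, I would simply note that one-dimensional bin packing embeds as a degenerate case. With this construction all numbers are polynomially bounded, so pseudo-polynomial DP does not help. That shows the tractable case is tractable because of fixed profile sizes and dimension one, not because of bounded magnitudes.

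The main obstacle is the modeling step rather than the combinatorics. I need to argue that GTAP's decision question really is per-node packing and not aggregate quota, and I need to make sure the MIG layer cannot split a workload across GPUs or otherwise relax the packing. The single-profile choice, together with requiring each workload on exactly one node, is the first thing I would try in order to neutralize both issues.
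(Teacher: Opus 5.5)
Your reduction is the paper's own: $m$ nodes of capacity $(1,\ldots,1)$, one workload per item, a single full-GPU MIG profile, and a direct correspondence between packings and simultaneous admissions. Your explicit per-node placement reading of the ``shared quota pool'' is a clarification the paper's sketch relies on without stating. The 3-PARTITION strengthening is unnecessary, because VBP is already strongly NP-hard and your reduction preserves magnitudes; the embedded 1D case also shows that the tractability in Proposition~\ref{prop:mig-tractable} comes from the fixed profile sizes, not from dimension one.
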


\begin{proof}[Proof sketch]
We reduce from $d$-dimensional vector bin packing, which is strongly
NP-hard for $d \geq 2$~\cite{chekuri2004}. Given an instance with $n$
items of $d$-dimensional sizes and $m$ unit-capacity bins:
\begin{itemize}
\item Create $m$ GPU nodes, each with capacity vector
$(1, 1, \ldots, 1) \in \mathbb{R}^d$.
\item Create $n$ workloads with demand vectors matching the item
sizes.
\item Fix a single MIG profile (full GPU, no partitioning).
\end{itemize}
All workloads are simultaneously admissible iff the vector bin packing
instance has a solution. Since $d$-dimensional vector bin packing is
strongly NP-hard~\cite{chekuri2004}, GTAP is NP-hard.
\end{proof}

\begin{corollary}[Topology Increases Hardness]
\label{cor:topology}
GTAP remains NP-hard with topology affinity constraints, since the
unconstrained problem is already hard. Adding constraints
(e.g., requiring NVLink-connected GPUs) can only increase the
hardness.
\end{corollary}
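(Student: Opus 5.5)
The plan is to reduce from GTAP without topology constraints to GTAP with topology affinity constraints. The reduction is the identity map on instances. Since Theorem~\ref{thm:nphard} already shows the unconstrained version is NP-hard, it is enough to embed it as a special case of the constrained problem.

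The concrete construction starts from any unconstrained GTAP instance produced in the proof of Theorem~\ref{thm:nphard}. That instance has $m$ nodes with capacity $(1,\ldots,1)$, $n$ workloads with item-size demand vectors, and a single full-GPU profile. We keep it unchanged, and we choose the topology graph $G$ to be complete on $\mathcal{N}$, so that all GPUs lie in one NVLink domain. Each workload gets a trivially satisfiable affinity constraint, for example ``all GPUs used lie in the same NVLink domain''. Every workload in that reduction occupies a single node, and every subset of nodes is NVLink-connected in the complete graph. So every placement feasible for the unconstrained instance also satisfies the affinity constraints, and the converse holds trivially. The two instances therefore have the same yes/no answer, and the map is polynomial (size $O(m^2 + nd)$).

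To close the argument, we compose this map with the reduction from $d$-dimensional vector bin packing. This gives a polynomial reduction from a strongly NP-hard problem to GTAP with topology constraints, so that problem is NP-hard (indeed strongly NP-hard). The informal phrase ``can only increase the hardness'' is then made precise: the constrained problem contains the unconstrained one as a restriction.

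The main obstacle is being careful about what ``topology affinity constraint'' means in Definition~\ref{def:gtap}. The construction must be legal under the definition, which calls the constraints optional. If ``optional'' already permits the empty constraint set, the claim is immediate, since the unconstrained instances are literally instances of the constrained problem. If instead every workload must carry a nonvacuous affinity constraint, the complete-graph construction above supplies one that holds trivially. I would state both readings explicitly. I would also remark that NP-hardness is inherited only because the constraints can be made vacuous: adding constraints in general does not increase the complexity class, so the correct justification is containment of the hard special case.
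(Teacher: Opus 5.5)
Your proposal is correct and is the same argument the paper gives. The paper's one-line justification is that topology constraints are optional, so the unconstrained instances from Theorem~\ref{thm:nphard} are a special case of the constrained problem and their hardness carries over. Your identity map with a complete NVLink graph makes that containment explicit, and it also covers the reading where every workload must carry a syntactically present constraint. Your closing caveat is apt as well: the paper's phrase ``can only increase the hardness'' is informal, and what actually carries the argument is containment of the hard special case, not the addition of constraints in itself.
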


\begin{corollary}[Optimal Admission Ordering]
In the static case where all workloads are available at $t = 0$,
choosing the admission sequence that minimizes total weighted wait
time is NP-hard under multi-resource demands.
\end{corollary}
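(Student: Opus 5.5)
The plan is to reduce from the decision version of GTAP, which Theorem~\ref{thm:nphard} shows is NP-hard for $d \geq 2$. The reduction uses the same construction: $m$ unit-capacity nodes, $n$ workloads whose demand vectors are the item sizes, and a single full-GPU profile. On top of this we place an ordering objective whose optimal value reveals whether all workloads fit at once.

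Concretely, we consider the static instance where all $n$ workloads arrive at $t = 0$. Every workload has unit service time $\mathbb{E}[S]=1$ (deterministic) and weight $1$. An admission sequence is a schedule in which a workload may start only when its demand vector fits in the residual capacity of some node, and it releases that capacity on completion. The total weighted wait time is therefore $\sum_w W(w) \geq 0$. It equals $0$ if and only if every workload is admitted at $t=0$, that is, if and only if all $n$ demand vectors can be packed simultaneously into the $m$ nodes. This is exactly the GTAP decision instance, and hence exactly the original vector bin packing instance.

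This gives the decision threshold. Asking whether an ordering exists with total weighted wait time at most $0$ (or at most $1/2$, given integral completion times under unit service) is equivalent to asking whether the vector bin packing instance is feasible. The reduction is clearly polynomial, so any polynomial algorithm for optimal ordering would decide vector bin packing. Because vector bin packing is strongly NP-hard~\cite{chekuri2004}, the ordering problem is NP-hard in the strong sense, even with unit weights and unit service times.

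The main obstacle is pinning down the ordering model so that ``admission sequence'' meaningfully includes the choice of which node or flavor each workload is placed on. If instead the placement were forced greedily, the zero-wait question would become a question about a specific greedy heuristic rather than about packing feasibility. I would handle this by defining an admission decision as a pair (workload, node), with the sequence fixing both the order and the placement, consistent with Definition~\ref{def:gtap}. A fallback is to use a single pooled capacity vector $(m,\dots,m)$. That version is multidimensional knapsack rather than bin packing, and it only yields weak NP-hardness, so the per-node formulation is preferable.
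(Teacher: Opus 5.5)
Your proof is correct and is essentially the paper's argument: the paper also decides GTAP by asking whether the optimal ordering attains zero total wait, and your explicit choices (positive weights, deterministic unit service, node placement fixed by the sequence, a number-preserving reduction giving strong hardness) make that one-line argument rigorous, with the placement point genuinely needed given that the paper's GTAP definition speaks of a ``shared quota pool.'' One caveat on your aside: with a single pooled capacity vector, zero total wait holds iff $\sum_w \mathbf{d}_w \le (m,\dots,m)$, which is trivially checkable, so that fallback cannot reuse the zero-wait test and would instead need a weighted, knapsack-style nonzero threshold.
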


\begin{proof}
If optimal static ordering were polynomial, we could solve GTAP by
checking whether the optimal ordering achieves zero total wait
(simultaneous admission of all workloads), contradicting NP-hardness.
\end{proof}

Proposition~\ref{prop:mig-tractable} and
Theorem~\ref{thm:nphard} together pinpoint the source of hardness:
it is not MIG partitioning (fixed profiles, polynomial), but the
multi-dimensional vector packing inherent in multi-resource admission.
This is precisely the structure our queueing model captures via the
effective server count $k_{\text{eff}}$
(Definition~\ref{def:keff}), which reduces the multi-dimensional
problem to a scalar quantity amenable to M/G/$k$ analysis.

\section{Evaluation}
\label{sec:eval}

We validate our queueing model on Kueue v0.19.0 using controlled
workload injection on a local Kubernetes cluster.

\subsection{Experimental Setup}

\paragraph{Cluster Configuration.}
We deploy a 4-node Kubernetes 1.36 cluster (1 control plane, 3
workers) using kind~\cite{kind}. A single ClusterQueue \texttt{cq-eval}
is configured with 2~CPU nominal quota and no borrowing, yielding
$k_{\text{eff}} = \lfloor 2.0 / 0.5 \rfloor = 4$ effective servers
for 500m~CPU workloads. BestEffortFIFO scheduling is enabled.

\paragraph{Workload Generator.}
We submit Kubernetes Jobs with Poisson inter-arrival times and
exponentially distributed service times (mean 20s). Each job requests
500m~CPU and 64Mi memory. We sweep the target utilization $\rho$ from
0.3 to 0.92 by adjusting the arrival rate $\lambda = \rho \cdot
k_{\text{eff}} / \mathbb{E}[S]$, submitting 80 jobs per utilization
level.

\paragraph{Metrics Collection.}
We extract timing data directly from Kueue Workload object conditions
(creation timestamp, admission timestamp, completion timestamp),
giving exact per-workload wait times without Prometheus aggregation
loss. Table~\ref{tab:metrics} shows the correspondence to
production Prometheus metrics.

\begin{table}[t]
\centering
\small
\caption{Mapping of queueing parameters to Kueue metrics.}
\label{tab:metrics}
\begin{tabular}{@{}ll@{}}
\toprule
\textbf{Parameter} & \textbf{Kueue Metric} \\
\midrule
Arrival rate $\lambda$ & \texttt{rate(kueue\_admitted\_workloads\_total)}$^*$ \\
Service time $S$ & \texttt{kueue\_execution\_time\_seconds} \\
Queue length $L_q$ & \texttt{kueue\_pending\_workloads\{status=active\}} \\
In-service $L_s$ & \texttt{kueue\_admitted\_active\_workloads} \\
Wait time $W_q$ & \texttt{kueue\_admission\_wait\_time\_seconds} \\
Utilization $\rho$ & \texttt{resource\_usage / nominal\_quota} \\
Preemption rate $\mu_p$ & \texttt{rate(kueue\_preempted\_workloads\_total)} \\
\bottomrule
\end{tabular}
\end{table}

\noindent$^*$In steady state ($\rho < 1$), admission throughput
equals arrival rate by flow balance. We restrict our analysis to
$\rho < 0.95$.

\subsection{Partition Non-Vacuity}

When 5\% of submitted workloads request resources exceeding the
ClusterQueue's potential available capacity (20~CPU against a 2~CPU
quota), 50\% of pending workloads are in Kueue's inadmissible set at
measurement time. The inadmissible set is a superset of the truly
unfeasible partition (some workloads are temporarily inadmissible but
quotable), so this is an upper bound on the unfeasible fraction. The
non-zero fraction confirms that the quotable/unfeasible partition
(Theorem~\ref{thm:partition}) is non-vacuous in practice: a
non-trivial fraction of pending workloads cannot be admitted without
cluster reconfiguration.

\subsection{Little's Law Validation}

As a consistency check, we verify that Little's Law
($L = \lambda \cdot W$) holds on observed Kueue data. We compute
$L_{\text{obs}}$ as the time-averaged number of waiting workloads
(integrated from arrival/admission event timestamps) and
$\lambda \cdot W$ from the observed arrival rate and mean wait time.

Across all utilization levels with $\rho_{\text{obs}} < 0.95$, the
ratio $L_{\text{obs}} / (\lambda \cdot W)$ equals 1.000 (to three
decimal places), confirming that the queueing model is internally
consistent with observed Kueue behavior. At $\rho_{\text{obs}} = 0.95$
the ratio is 0.950, and above $\rho = 1.0$ (transient overload) the
ratio degrades to 0.79--0.84, as expected when the steady-state
assumption is violated.

\subsection{Wait Time Prediction Accuracy}

We compare predicted $\mathbb{E}[W_q]$ from the modified Erlang-C
approximation (Equation~\ref{eq:waitapprox}) against observed mean
wait times, and against an exponential moving average (EMA) baseline
($\alpha = 0.3$) that predicts wait time from recent history.

\begin{table}[t]
\centering
\small
\caption{Wait time prediction across utilization levels.
$k_{\text{eff}} = 4$, truncated exponential service times (mean 20s,
observed mean 21--25s, $C_S \approx 0.8$--$1.0$), 80 jobs per level.}
\label{tab:waittime}
\begin{tabular}{@{}rrrrrr@{}}
\toprule
$\rho_{\text{tgt}}$ & $\rho_{\text{obs}}$ & $\hat{W}_q$ (s) &
$W_q$ (s) & Ratio & Little \\
\midrule
0.30 & 0.27 & 0.7 & 0.04 & 18$\times$ & 1.000 \\
0.50 & 0.64 & 5.0 & 1.2 & 4.1$\times$ & 1.000 \\
0.65 & 0.72 & 7.9 & 3.5 & 2.3$\times$ & 1.000 \\
0.75$^\dagger$ & 1.16 & --- & 49.7 & --- & 0.836 \\
0.85 & 0.95 & 88.9 & 11.0 & 8.1$\times$ & 0.950 \\
0.92$^\dagger$ & 1.23 & --- & 42.1 & --- & 0.785 \\
\bottomrule
\end{tabular}

\vspace{2pt}
\noindent{\footnotesize $^\dagger$Transient overload:
$\rho_{\text{obs}} > 1$ over 80 jobs. Erlang-C is undefined;
real system absorbs the burst through transient queueing.}
\end{table}

Table~\ref{tab:waittime} shows the Erlang-C approximation
consistently overestimates observed wait times by 2.3--18$\times$
across the stable regime ($\rho_{\text{obs}} < 1$). The
overestimation is largest at low $\rho$ (18$\times$ at
$\rho = 0.27$, where observed waits are near zero and the model
predicts 0.7s) and tightest in the moderate regime (2.3$\times$ at
$\rho = 0.72$). At $\rho_{\text{obs}} > 1$, the formula is undefined
and the model provides no prediction.

The overestimation has three sources: (1)~BestEffortFIFO avoids
head-of-line blocking, producing shorter waits than the FCFS
assumption in M/G/$k$; (2)~the Cosmetatos-Tijms variability
correction is calibrated for the M/M/$k$ to M/D/$k$ range, not for
the truncated exponential distribution used here; and (3)~the
dominant-resource $k_{\text{eff}}$ is a worst-case estimate that
understates the actual concurrency.

\paragraph{Why conservative, and can we do better?}
The overestimation is structural, not accidental: $k_{\text{eff}}$ is
a lower bound on actual concurrency (vector packing allows more
concurrent workloads when demands are not perfectly aligned), and
BestEffortFIFO produces shorter waits than the FCFS assumption in
M/G/$k$. A tighter bound would require either (a)~a scheduling
discipline-specific analysis (no M/G/$k$ results exist for
BestEffortFIFO), or (b)~a sharper $k_{\text{eff}}$ that accounts for
the actual workload mix rather than the worst-case bottleneck
dimension. Both are open problems. The Erlang-C approximation is
not a proven upper bound---it is an empirically conservative
estimate under the conditions tested.

\paragraph{Comparison with EMA baseline.}
In a separate experiment at $\rho_{\text{obs}} \approx 0.9$ (64
workloads, 15s mean service), the Erlang-C MAE was 18.2s versus the
EMA baseline's ($\alpha = 0.3$) MAE of 6.8s. The EMA baseline wins
on point prediction accuracy because it adapts to recent
observations. However, the Erlang-C approximation is an \emph{a
priori} prediction requiring only the arrival rate, service time
distribution, and quota---it does not need historical wait time
observations. This makes it suitable for capacity planning (``what
will wait times be if we add this workload class?'') where no
historical data exists. The EMA is better for real-time estimation
of a running system.

\subsection{Scheduling Latency}

The continuous M/G/$k$ model assumes instantaneous admission
decisions. In Kueue, the scheduler runs in a loop
(\texttt{wait.UntilWithBackoff}), processing all eligible heads each
iteration. The time between a workload becoming eligible and the next
scheduler iteration depends on current load and is not a fixed
interval. At low utilization, this scheduling latency dominates
observed wait times (observed $W_q \approx 0.04$s at $\rho = 0.27$).
At moderate-to-high utilization, queueing delays dominate and the
scheduling latency is negligible.

\subsection{Vector Packing Validation}

To test the multi-dimensional $k_{\text{eff}}$ reduction
(Definition~\ref{def:keff}), we configure a ClusterQueue with
asymmetric quotas: 4~CPU and 2Gi memory. Workloads request 500m~CPU
and 512Mi memory, giving:
\begin{align*}
k_{\text{eff}}(\text{CPU}) &= \lfloor 4 / 0.5 \rfloor = 8 \\
k_{\text{eff}}(\text{mem}) &= \lfloor 2048 / 512 \rfloor = 4 \\
k_{\text{eff}}(\text{vector}) &= \min(8, 4) = 4
\end{align*}

\begin{table}[t]
\centering
\small
\caption{Vector packing: scalar vs.\ multi-dimensional
$k_{\text{eff}}$. The scalar CPU-only estimate ($k=8$) overestimates capacity and
therefore underestimates queueing delay; the vector estimate ($k=4$,
memory-bottlenecked) correctly identifies the binding dimension. 80 jobs/level.}
\label{tab:2d}
\begin{tabular}{@{}rrrrrr@{}}
\toprule
$\rho_{\text{tgt}}$ & $\rho_{\text{vec}}$ &
$\hat{W}_{k=4}$ & $\hat{W}_{k=8}$ & $W_q$ & Bottleneck \\
\midrule
0.50 & 0.60 & 3.2 & 0.0 & 2.4 & memory \\
0.70 & 0.77 & 10.7 & 0.1 & 2.5 & memory \\
\bottomrule
\end{tabular}
\end{table}

Table~\ref{tab:2d} shows the result. The scalar CPU-only
$k_{\text{eff}} = 8$ predicts near-zero wait time ($\hat{W} < 0.1$s)
because $\rho_{\text{CPU}} \approx 0.3$---it does not see the memory
bottleneck. The vector $k_{\text{eff}} = 4$ correctly identifies
memory as the binding dimension and predicts $\hat{W} = 3.2$s at
$\rho = 0.50$, overestimating the observed 2.4s by 33\%. At $\rho = 0.70$, the
overestimate widens to 4.3$\times$ (10.7s predicted vs.\ 2.5s
observed) as the Erlang-C formula becomes more sensitive near the
stability boundary. A third run at $\rho_{\text{target}} = 0.85$
produced $\rho_{\text{obs}} > 1$ and is excluded (same treatment as
Table~\ref{tab:waittime}$^\dagger$). Without the multi-dimensional
$k_{\text{eff}}$ reduction, the model misses the bottleneck entirely
and underestimates wait times by $> 95$\%.

The 2D experiment validates the single-class $k_{\text{eff}}$
(Definition~\ref{def:keff}). The multi-class dominant-resource
extension (Equation~\ref{eq:keff-multi}), which weights demand by
class proportion $\phi_j$, is a standard DRF-style
approximation~\cite{drf2011} not separately tested here.

\paragraph{GPU resource validation.}
To validate with GPU resources, we configure Kueue's DRA
(Dynamic Resource Allocation) integration with a simulated GPU
driver (dra-example-driver) providing 8 GPUs per node across
3 nodes. The ClusterQueue quota is set to 4~GPUs, 8~CPU, and 8Gi
memory. Each workload requests 1~GPU (via ResourceClaimTemplate),
500m~CPU, and 256Mi memory, giving $k_{\text{eff}}(\text{GPU}) = 4$,
$k_{\text{eff}}(\text{CPU}) = 16$,
$k_{\text{eff}}(\text{vector}) = 4$.

\begin{table}[t]
\centering
\small
\caption{GPU resource validation via DRA. The GPU-bottlenecked
$k_{\text{eff}} = 4$ correctly identifies GPU as the binding
dimension; the CPU-only $k_{\text{eff}} = 16$ misses the bottleneck.
30 jobs/level.}
\label{tab:gpu}
\begin{tabular}{@{}rrrrrr@{}}
\toprule
$\rho_{\text{tgt}}$ & $\rho_{\text{GPU}}$ &
$\hat{W}_{k=4}$ & $\hat{W}_{k=16}$ & $W_q$ & Little \\
\midrule
0.50 & 0.81 & 16.8 & 0.0 & 4.8 & 1.000 \\
0.70 & 0.76 & 9.3 & 0.0 & 1.5 & 1.000 \\
0.85 & 0.80 & 11.9 & 0.0 & 4.3 & 1.000 \\
\bottomrule
\end{tabular}
\end{table}

Table~\ref{tab:gpu} confirms the same pattern as the CPU/memory
experiment (Table~\ref{tab:2d}): the scalar CPU-only estimate misses
the GPU bottleneck entirely ($\hat{W} = 0$), while the vector
$k_{\text{eff}} = 4$ overestimates by 2.5--3.5$\times$ (conservative
direction). Little's Law holds exactly. The queueing model is
resource-type-agnostic: the same $k_{\text{eff}}$ reduction works
whether the bottleneck is memory or GPU quota. MIG partitioning and
NVLink topology constraints introduce additional structure not
exercised here.

\section{Related Work}
\label{sec:related}

\paragraph{Queueing Theory for GPU Scheduling.}
Mitzenmacher and Shahout~\cite{mitzenmacher2025} survey
learning-augmented scheduling with predictions, identifying open
problems in multi-server systems. Dai et al.~\cite{dai2025} prove
throughput-optimality for batched LLM scheduling. Wang and
Grosof~\cite{wang2025} derive novel lower bounds on M/G/$k$ response
time. All focus on inference-level scheduling; we address the cluster
admission layer.

\paragraph{GPU Cluster Schedulers.}
Tiresias~\cite{tiresias2019} applies multi-level feedback queues from
queueing theory to GPU job scheduling but does not derive wait time
bounds. Shockwave~\cite{shockwave2023} provides fair GPU allocation
without queueing-theoretic analysis.
HiveD~\cite{hived2020} introduces topology-aware virtual private
clusters but does not consider MIG or queueing models.
Lu et al.~\cite{wind2026} use predictive multi-dimensional resource
scheduling for GPU utilization but rely on ML prediction rather than
queueing theory.

\paragraph{Fair Resource Allocation.}
Dominant Resource Fairness~\cite{drf2011} defines max-min fairness
for multi-resource environments and is the theoretical basis for
Kueue's fair-sharing mechanism. Themis~\cite{themis2020} achieves
finish-time fairness for ML training.
Gandiva~\cite{gandiva2018} exploits GPU time-slicing for locality
and packing. Pollux~\cite{pollux2021} co-optimizes cluster
scheduling with per-job hyperparameters.
Gavel~\cite{gavel2020} provides heterogeneity-aware fair scheduling
across GPU types. None derive queueing-theoretic wait time bounds.

\paragraph{MIG Scheduling.}
Villarrubia et al.~\cite{far2025} study moldable MIG scheduling with
a 7/4 approximation for single-GPU packing. Our GTAP extends to
multi-resource vector packing with topology. KRYPTON~\cite{krypton2025} uses MIG for
isolation but does not address scheduling optimality.

\paragraph{Vector Bin Packing.}
Chekuri and Khanna~\cite{chekuri2004} prove no APTAS exists for
$d \geq 2$ dimensions. We connect this hardness to the effective
server count in queueing models.

\section{Limitations}
\label{sec:limitations}

Our model makes several simplifying assumptions that should be
understood when applying the results in practice.

\paragraph{Poisson Arrivals.}
We assume Poisson arrivals (the ``M'' in M/G/$k$). Real workload
arrival processes exhibit burstiness (e.g., batch submissions,
periodic cron jobs) that may violate this assumption. The GI/GI/$k$
extension of Li and Goldberg~\cite{ligoldberg2024} covers general
inter-arrival times but with even larger constants.

\paragraph{Admission Fair Sharing (AFS).}
When AFS is enabled, admission ordering is by usage share rather than
FIFO. This violates the FCFS assumption underlying the M/G/$k$ model.
Our bounds apply to non-AFS clusters or within a single LocalQueue
(where AFS does not reorder). Extending to AFS ordering requires a
processor-sharing or generalized-processor-sharing model.

\paragraph{Concurrent Admission.}
By default, Kueue admits one workload per ClusterQueue per scheduler
iteration. The alpha-stage ConcurrentAdmission feature gate allows
parallel flavor assignment but does not change the per-iteration
admission count. Our M/G/$k$ model assumes continuous service; the
per-iteration scheduling latency introduces additional delay at low
utilization.

\paragraph{Unmodeled Features.}
Several production features are outside our model's scope:
\emph{Topology-Aware Scheduling (TAS)} adds placement constraints
beyond vector packing.
\emph{Elastic jobs} have variable resource demands.
\emph{waitForPodsReady} adds a post-admission delay before the
workload is considered running.
\emph{Hierarchical cohorts} create multi-level borrowing relationships
not captured by our single-level borrowing model.
\emph{Admission checks} (e.g., resource flavor validation, admission
webhooks) add per-workload overhead not included in service time.

\paragraph{Arrival Rate Measurement.}
As noted in \S\ref{sec:eval}, we approximate arrival rate using
admission throughput. This is valid under steady state but
underestimates true arrivals under saturation. A dedicated arrival
counter would improve accuracy for $\rho > 0.95$.

\section{Discussion and Open Problems}
\label{sec:discussion}

\paragraph{Online Competitive Admission.}
Our bounds are steady-state. An online competitive analysis of Kueue's
greedy admission against an optimal offline scheduler remains open.

\paragraph{Fair-Share Convergence.}
Kueue's Dominant Resource Share mechanism implements a variant of DRF.
Whether the usage-based admission ordering converges under dynamic
arrivals and departures is an open control-theoretic question.

\paragraph{Predictive Admission.}
Mitzenmacher's framework for scheduling with predictions could be
extended to admission control: given predicted service times, can we
improve wait time bounds? Our model provides the foundation for such
analysis.

\section{Conclusion}
\label{sec:conclusion}

We presented the first formal wait time bounds for GPU cluster
admission control. Our central result---the quotable/unfeasible
partition theorem---shows that accurate wait time estimation requires
distinguishing workloads that can eventually be admitted from those
that cannot. For quotable workloads, under an explicit stochastic domination assumption
(Modeling Assumption~1), we modeled the system as an M/G/$k$
reference queue with effective server count determined by vector
packing, establishing $O(1/(1-\rho))$ wait time scaling and providing
a practical modified Erlang-C approximation. We proved that optimal admission ordering is NP-hard
under multi-resource vector packing, pinpointing dimensionality (not
MIG partitioning) as the source of hardness. Experiments on Kueue v0.19.0 with CPU, memory, and GPU (via DRA)
resources confirm that the vector $k_{\text{eff}}$ correctly
identifies bottleneck dimensions and that the Erlang-C approximation
consistently overestimates in the conservative direction, though with
2.3--18$\times$ looseness that limits its use as a point predictor.
All model parameters are derivable from existing Kueue metrics.

Our work provides the theoretical foundation for practical wait time
estimation in GPU clusters---a capability that production systems have
lacked despite years of user demand. We hope this encourages further
application of queueing theory to the cluster admission layer, where
significant open problems remain.

\paragraph{AI Tool Disclosure.}
AI tools were used for literature search, prose editing, and LaTeX
formatting. All theoretical contributions (proofs, model design,
experimental methodology) are the authors' original work.

\bibliographystyle{ACM-Reference-Format}

\end{document}